\newtheorem{theorem}{Theorem}
\newtheorem{lemma}{Lemma}
\preto\tabular{\setcounter{alglinecounter}{0}}
\newcounter{alglinecounter}
\newcommand{\gk}{\stepcounter{alglinecounter}\arabic{alglinecounter})}
\newcommand{\unifdist}{\textsf{Unif}}
\newcommand{\berndist}{\textsf{Bern}}
\newcommand{\prob}{\mathbb{P}}
\newcommand{\mean}{\mathbb{E}}
\newcommand{\ind}{\mathbb{I}}
\begin{document}

\title{Optimal rolling of fair dice using fair coins}

\author{Mark Huber and Danny Vargas}

\maketitle

\begin{abstract}  In 1976, Knuth and Yao presented an algorithm for sampling from a finite distribution using flips of a fair coin that on average used the optimal number of flips.  Here we show how to easily run their algorithm for the special case of rolling a fair die that uses memory linear in the input.  Analysis of this algorithm yields a bound on the average number of coin flips needed that is slightly better than the original Knuth-Yao bound.  This can then be extended to discrete distributions in a near optimal number of flips again using memory linear in the input.
\end{abstract}

\section{Introduction}

A \emph{probabilisitic Turing machine} is a Turing machine that has access to an independent, identically distributed (iid) sequence of flips of a fair coin.  Specifically, say that \( B \) is the flip of a fair coin if \( B \) is equally likely to be 1 or 0.  This means that \( B \) is uniform over \( \{0, 1\} \), written \( B \sim \unifdist(\{0, 1\}) \).  Alternatively, say that \( B \) has the Bernoulli distribution with parameter \( 1 / 2 \), and write \( B \sim \berndist(1 / 2) \).

A natural question to ask is how many flips of a fair coin would it take (on average) to generate a draw from a roll of a fair die?  To be precise, if \( \prob(X = i) = 1 / i \) for \( i \in \{1, \ldots, n \} \) say that \( X \) is the roll of a fair \( n \)-sided die.  Write \( X \sim \unifdist(\{1, \ldots, n\}) \) or \( X \sim \text{d}n \) for short.

In 1976, Knuth and Yao~\cite{knuth1976complexity} developed the optimal algorithm for generating from any discrete probability distribution using a number of fair coin flips. Various authors (such as~\cite{baidya2024efficient, saad2020fast, sinha2013high}) have looked at how to implement Knuth-Yao as efficiently as possible in different use cases.

The purpose of this work is to show how to implement their algorithm for generating from a fair die in a very simple way using a \emph{randomness recycler}~\cite{fill2000randomness, huber2016perfect} approach.

The memory used by the algorithm consists of storing the number \( n \), the original input, storing \( m \) a positive integer which is at most \( 2n - 1 \), storing \( X \) a positive integer at most \( m \), and one bit.  Altogether, this uses \( 3 \lceil \log_2(n) \rceil + 3 \) bits of memory when the input uses \( \lceil \log_2(n) \rceil \) bits of memory.

The algorithm operates as follows.  The idea is to always have a state that is an ordered pair \( (X, m) \) of two random integers.  At any time during the algorithm's run, if the computation conducted by the algorithm used fair coin flips, then \( [X | m] \sim \text{d}m \).  If \( m = n \), then the algorithm is done, since \( X \sim \text{d}n \) as desired.

If \( m < n \), the next step is to flip a fair coin \( B \) and change the state as follows.
\[
(X, m) \mapsto (X + Bm, 2m).
\]
The new state has \( [X + Bm \mid 2m] \sim \text{d}(2m) \).  This allows us to increase the number of sides of the die that \( X \) is a draw from until the number of sides reaches at least \( n \).

At this point, if \( m \geq n \) and \( X \leq n \) then 
\[
(X, m) \mapsto (X, n).
\]
That is to say, accept \( X \) as a draw from \( \text{d}n \).  However, if \( X > n \), do not throw away the state and start over.  Instead, note that \( [X - m \mid m - n] \sim \text{d}(m - n) \).  In other words, what is left over after subtracting the die roll is still a roll of a fair die, just one with a smaller number \( m - n \) of sides.  That is,
\[
(X, m) \mapsto (X - n, m - n).
\]

Return to the doubling procedure and repeat until acceptance occurs.

To illustrate, consider trying to generate a roll from a 5-sided die.  Begin in the state \( (1, 1) \).  Since a 1-sided die only has output 1, \( X = 1 \) is a valid draw from a \( \text{d}1 \).

Now let \( B_1, B_2, B_3, \ldots \) be the iid flips of the fair coin used by the algorithm.  First, the number of sides of the die is doubled to 2, and the current draw from the die is \( 1 + 1 \cdot B_1 \).  So the new state is \( (1 + B_1, 2) \).  Since \( 2 < 5 \), another coin is flipped to get to state \( (1 + B_1 + 2 B_2, 4) \).  Since \( 4 < 5 \), flip another coin to get to state \( (1 + B_1 + 2 B_2 + 4 B_3, 8) \).

Note that \( B_1 + 2B_2 + 4B_3 \) is uniform over \( \{0, 1, 2, \ldots, 7\} \), so \( X = 1 + B_1 + 2 B_2 + 4B_3 \sim \text{d}8 \).  At this point, if \( X \leq 5 \), then this number is uniform over \( \{1, 2, 3, 4, 5\} \) and the algorithm terminates.  If \( X > 5 \), then \( X \) is uniform over \( \{6, 7, 8\} \).  Therefore \( X - 5 \) is uniform over \( \{1, 2, 3\} \), so \( X - 5 \) is a 3-sided die roll.  Therefore the next state is
\[
(X \ind(X \leq 5) + (X - 5) \ind(X > 5), 5 \ind(X \leq 5) + (8 - 5) \ind(X > 5))
\]
where \( \ind(r) \) is the indicator function that evaluates to 1 if expression \( r \) is true, and 0 otherwise.

At this point, if the second component is \( 5 \) the algorithm terminates, otherwise it continues by doubling the number of sides on the die from 3 to 6.  Then it accepts if the state is at most \( 6 \), otherwise returning to state \( (1, 1) \).  The evolution of the state for \( n = 5 \) is given in Figure~\ref{FIG:example}.  At a given state, when there is one child, take that path, when there are two, choose the child uniformly from the two choices.

\begin{figure}[h!]
  \begin{center}
  \begin{tikzpicture}[xscale=0.8]
    \draw (7,0) node (A) {\( (1, 1) \)};
    \draw (3, -1) node (B) {\( (1, 2) \)};
      \draw (1, -2) node (D) {\( (1, 4) \)};
      \draw (5, -2) node (E) {\( (3, 4) \)};
    \draw (11, -1) node (C) {\( (2, 2) \)};
      \draw (9, -2) node (F) {\( (2, 4) \)};
      \draw (13, -2) node (G) {\( (4, 4) \)};

      \draw (0, -3) node (H) {\( (1, 8) \)};
      \draw (2, -3) node (I) {\( (5, 8) \)};

      \draw (4, -3) node (J) {\( (3, 8) \)};
      \draw (6, -3) node (K) {\( (7, 8) \)};

      \draw (8, -3) node (L) {\( (2, 8) \)};
      \draw (10, -3) node (M) {\( (6, 8) \)};

      \draw (12, -3) node (N) {\( (4, 8) \)};
      \draw (14, -3) node (O) {\( (8, 8) \)};

      \draw (0, -4) node (P) {\( (1, 5) \)};
      \draw (2, -4) node (Q) {\( (5, 5) \)};

      \draw (4, -4) node (R) {\( (3, 5) \)};
      \draw (6, -4) node (S) {\( (2, 3) \)};
        \draw(5, -5) node (X) {\( (2, 6) \)};
          \draw(5,-6) node(DD) {\( (2, 5) \)};
        \draw(7, -5) node (Y) {\( (5, 6) \)};
          \draw(7, -6) node (EE) {\( (5, 5) \)};

      \draw (8, -4) node (T) {\( (2, 5) \)};
      \draw (10, -4) node (U) {\( (1, 3) \)};
        \draw(9, -5) node (Z) {\( (1, 6) \)};
        \draw(11, -5) node (AA) {\( (4, 6) \)};
        \draw(9, -6) node (FF) {\( (1, 5) \)};
        \draw(11, -6) node (GG) {\( (4, 5) \)};

      \draw (12, -4) node (V) {\( (4, 5) \)};
      \draw (14, -4) node (W) {\( (3, 3) \)};
        \draw(13, -5) node (BB) {\( (3, 6) \)};
        \draw(15, -5) node (CC) {\( (6, 6) \)};
        \draw(13, -6) node (HH) {\( (3, 5) \)};
        \draw(15, -6) node (II) {\( (1, 1) \)};

    \draw[->] (A) to (B);
    \draw[->] (A) to (C);
    \draw[->] (B) to (D);
    \draw[->] (B) to (E);
    \draw[->] (C) to (F);
    \draw[->] (C) to (G);
    \draw[->] (D) to (H);
    \draw[->] (D) to (I);
    \draw[->] (E) to (J);
    \draw[->] (E) to (K);
    \draw[->] (F) to (L);
    \draw[->] (F) to (M);
    \draw[->] (G) to (N);
    \draw[->] (G) to (O);
    \draw[->] (H) to (P);
    \draw[->] (I) to (Q);
    \draw[->] (J) to (R);
    \draw[->] (K) to (S);
    \draw[->] (L) to (T);
    \draw[->] (M) to (U);
    \draw[->] (N) to (V);
    \draw[->] (O) to (W);
    \draw[->] (S) to (X);
    \draw[->] (S) to (Y);
    \draw[->] (U) to (Z);
    \draw[->] (U) to (AA);
    \draw[->] (W) to (BB);
    \draw[->] (W) to (CC);
    \draw[->] (X) to (DD);
    \draw[->] (Y) to (EE);
    \draw[->] (Z) to (FF);
    \draw[->] (AA) to (GG);
    \draw[->] (BB) to (HH);
    \draw[->] (CC) to (II);
    
  \end{tikzpicture}
  \end{center}
  \caption{From each state, each child is equally likely to be chosen.  When a state of the form \( (i, 5) \) is reached, output \( i \).  When state \( (1, 1) \) is reached, begin again at the root.}
  \label{FIG:example}
\end{figure}
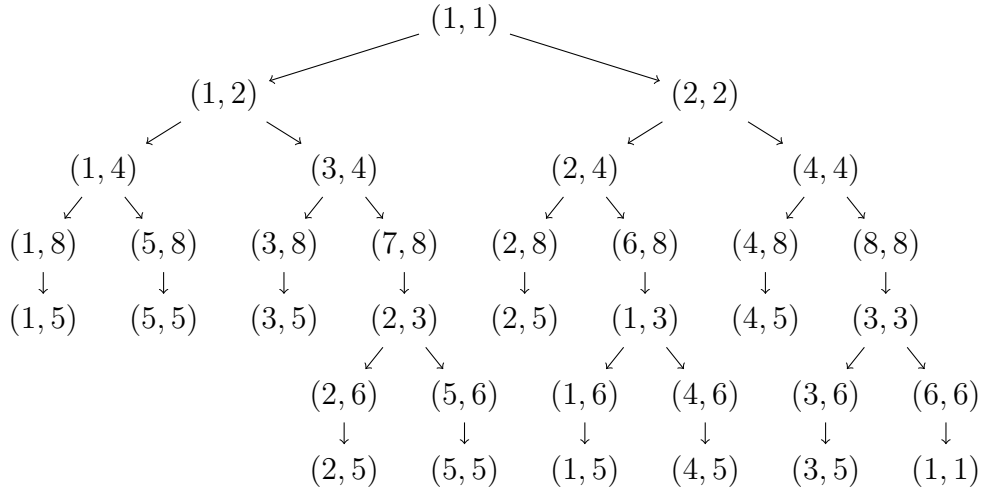
  
In general, the pseudocode for this algorithm is as follows.

\vspace*{12pt}
\begin{tabular}{rl}
    \toprule
    \multicolumn{2}{l}{\textbf{Algorithm} \textsf{optimal\_uniform}(n) } \\
    \midrule
  \gk & \hspace{0em} \( m \leftarrow 1,\ X \leftarrow 1 \) \\
  \gk & \hspace*{0em} while \( m < n \) \\
  \gk & \hspace*{1em} draw \( B \) uniform over \( \{0, 1\} \) \\
  \gk & \hspace*{1em} \( X \leftarrow X + Bm \), \( m \leftarrow 2m \) \\
  \gk & \hspace*{1em} if \( m \geq n \) and \( X \leq n \) then \( m \leftarrow n \) \\
  \gk & \hspace*{1em} if \( m \geq n \) and \( X > n \) then \( X \leftarrow X - n \), \( m \leftarrow m - n \) \\ 
  \gk & \hspace*{0em} return \( X \) \\  
  \bottomrule
\end{tabular}
\vspace*{12pt}

This type of algorithm, where either acceptance occurs, or rejection occurs and as much of the randomness leftover is used in the next state is called the \emph{randomness recycler} protocol~\cite{fill2000randomness, huber2016perfect}.

This method works best with a fair die.  The \emph{Alias method}~\cite{walker1977efficient} gives a close to efficient method of generating from a loaded die, that is, an arbitrary distribution over a finite number of states.  The new method presented here can be extended to handle this situation as well, but it is less simple to implement.

The rest of the work is organized as follows.  The next section contains a proof that the algorithm is correct and that it uses the optimal number of coin flips.  Section~\ref{sec:three} shows an upper bound on the expected number of flips needed that is slightly better than that of~\cite{knuth1976complexity}.  
The last section then shows how this technique can be extended to loaded dice.

\section{Proof of correctness}

\begin{theorem}
\label{thm:one}
    The output of \( \textsf{optimal\_uniform}(n) \) is uniform over \( \{1, \ldots, n \} \).
\end{theorem}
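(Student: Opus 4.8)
The plan is to prove the loop invariant that the paper highlights: whenever control reaches the top of the \textbf{while} loop (line 2), the current pair $(X, m)$ satisfies $[X \mid m] \sim \text{d}m$, i.e.\ conditioned on the event $\{m = \mu\}$ the variable $X$ is uniform over $\{1, \ldots, \mu\}$ for every attainable value $\mu$. I would formalize the run as a sequence of states $(X_t, m_t)$ recorded at the $t$-th visit to line 2, declaring the state frozen once $m = n$ is reached, and prove the invariant by induction on $t$. The base case is immediate from line 1, where $(X_0, m_0) = (1,1)$ and $X_0 = 1$ is trivially uniform over $\{1\}$.

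For the inductive step I would run one iteration of the loop body from a state with $m = \mu < n$ and track the conditional law of $X$. Since $B$ is a fresh fair coin independent of $(X_t, m_t)$, the doubling step $X \mapsto X + B\mu$, $m \mapsto 2\mu$ carries a uniform draw on $\{1,\ldots,\mu\}$ to a uniform draw on $\{1, \ldots, 2\mu\}$; this is exactly the fact $[X + Bm \mid 2m] \sim \text{d}(2m)$ asserted in the introduction. If $2\mu < n$ the iteration ends and the invariant holds with new modulus $2\mu$. If $2\mu \ge n$, I would split on the sign of $X - n$: conditioning a uniform variable on $\{1,\ldots,2\mu\}$ on $\{X \le n\}$ gives a uniform variable on $\{1,\ldots,n\}$ (acceptance, $m \mapsto n$), while conditioning on $\{X > n\}$ gives $X - n$ uniform on $\{1, \ldots, 2\mu - n\}$ (recycling, $m \mapsto 2\mu - n$). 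These are the recycling facts stated in the text, so in every branch the updated $X$ is uniform over $\{1, \ldots, m\}$ for the updated value of $m$.

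The subtle point, and the step I expect to be the main obstacle, is that distinct predecessor moduli can lead to the same new value of $m$, so conditioning on $\{m_{t+1} = \mu'\}$ secretly aggregates several branches: $\mu'$ may be reachable both by doubling a small modulus and by recycling a larger one, and $\{m = n\}$ collects every newly accepted trajectory together with already-absorbed ones. I would resolve this by noting that the conditional law of $X_{t+1}$ given $\{m_{t+1} = \mu'\}$ is a mixture, over these branches, of the per-branch conditional laws; since each per-branch law was just shown to be uniform on the \emph{same} set $\{1, \ldots, \mu'\}$, the mixture is uniform on $\{1,\ldots,\mu'\}$ as well. This mixture argument is what lets the branchwise one-step analysis assemble into the global invariant.

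Finally I would close the argument by observing that the loop can be exited only through acceptance: doubling to $2\mu < n$ keeps the loop running, and recycling produces $m = 2\mu - n < n$ (using $\mu < n$), so the only way for the guard $m < n$ to fail is the assignment $m \mapsto n$ on line 5. Hence at termination $m = n$, and the invariant gives $X \sim \text{d}n$, which is the claim. I would also remark that whenever $m$ first reaches a value $\ge n$ acceptance occurs with probability $n/(2\mu)$, which is at least $1/2$ since $2\mu < 2n$; this makes the algorithm halt with probability one, so the output is well defined.
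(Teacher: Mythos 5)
Your proposal is correct and follows essentially the same route as the paper: the loop invariant \( [X \mid m] \sim \text{d}m \) established by induction over the lines of the loop body, combined with termination with probability one (the paper bounds the per-crossing acceptance chance by \( 1/m \geq 1/(2n) \), you by \( n/(2\mu) > 1/2 \); both suffice). Your explicit mixture argument for conditioning on \( \{m_{t+1} = \mu'\} \) when several branches produce the same modulus is a point the paper's lemma passes over with ``an induction then completes the proof,'' so your write-up is, if anything, slightly more careful on the same path.
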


It helps to have the following fact in place.

\begin{lemma}
    At the end of each line of the algorithm, for state \( (X, m) \) it holds that \( [X \mid m] \sim \text{d}m \).
\end{lemma}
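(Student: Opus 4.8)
The plan is to prove the invariant $[X \mid m] \sim \text{d}m$ by induction on the number of lines of pseudocode executed, showing that each line either leaves the conditional law of $X$ given $m$ unchanged or transforms it in a way that preserves conditional uniformity. The base case is line 1): after $m \leftarrow 1$, $X \leftarrow 1$ the state is the deterministic $(1,1)$, and a one-sided die has its only outcome equal to $1$, so $[X \mid m = 1] \sim \text{d}1$ trivially. Lines 2), 3), and 7) do not alter the pair $(X, m)$---line 3) only draws a fresh coin $B \sim \berndist(1/2)$ that is independent of the current state and is consumed in line 4)---so for these lines the invariant passes through unchanged. It therefore remains to verify the three lines that can modify the state: the doubling in line 4), the acceptance in line 5), and the recycling in line 6).

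For the doubling, I would fix a value $\mu$ that $m$ takes before line 4) and condition on $\{m = \mu\}$; by the inductive hypothesis $X$ is uniform on $\{1, \ldots, \mu\}$, and $B$ is an independent fair coin. The updated $m' = 2\mu$ determines $\mu$ uniquely (doubling is injective on $m$), so conditioning on the new value $\{m' = 2\mu\}$ is the same as conditioning on $\{m = \mu\}$. On this event $X' = X + B\mu$ equals a uniform draw from $\{1, \ldots, \mu\}$ when $B = 0$ and from $\{\mu+1, \ldots, 2\mu\}$ when $B = 1$, each branch occurring with probability $1/2$; hence $X'$ is uniform on $\{1, \ldots, 2\mu\}$ and $[X' \mid m' = 2\mu] \sim \text{d}(2\mu)$. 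No mixing of distinct sources occurs here precisely because of the injectivity.

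The acceptance and recycling steps are where the real work lies, because there the value of $m$ is reassigned on a data-dependent event and several source states can be funneled into the same target value of $m$. The organizing observations are, first, that lines 5) and 6) act on complementary events---line 5) fires exactly when $m \geq n$ and $X \leq n$, line 6) exactly when $m \geq n$ and $X > n$, and since neither line alters $X$ before the other's test is made, at most one of them touches a given state; and second, that restricting a uniform law on $\{1, \ldots, \mu\}$ to the sub-block $\{X \leq n\}$ yields a uniform law on $\{1, \ldots, n\}$, while restricting it to $\{X > n\}$ and then subtracting $n$ yields a uniform law on $\{1, \ldots, \mu - n\}$. Thus from any single source value $\mu \geq n$, line 5) contributes a uniform-on-$\{1,\ldots,n\}$ component to the event $\{m = n\}$, and line 6) contributes a uniform-on-$\{1, \ldots, \mu - n\}$ component to the event $\{m = \mu - n\}$.

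The main obstacle is then the bookkeeping of \emph{collisions}: a given post-update value $\nu$ of $m$ may be reachable both as the acceptance target $n$ and as a recycling target $\mu - n$ for one or more source values $\mu$ (for instance $\nu = n$ arises from line 5) whenever $m \geq n$ and $X \leq n$, and also from line 6) when $m = 2n$ and $X > n$). I would dispatch this with the elementary fact that a mixture (convex combination) of uniform distributions all supported on the \emph{same} set $\{1, \ldots, \nu\}$ is again uniform on $\{1, \ldots, \nu\}$, the weights being the conditional probabilities of the various sources given $\{m = \nu\}$. Since every component feeding a fixed target $\nu$ is, by the previous paragraph, uniform on $\{1, \ldots, \nu\}$, the conditional law $[X \mid m = \nu]$ after lines 5)--6) is uniform on $\{1, \ldots, \nu\}$, that is, $\text{d}\nu$, completing the inductive step and hence the proof. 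Throughout, I only ever condition on reachable values of $m$, so the argument is unaffected by whether any particular run eventually terminates.
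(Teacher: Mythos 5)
Your proof is correct and takes essentially the same route as the paper's: induction on the lines executed, with the base case at line 1) and the three distributional facts for doubling (line 4), acceptance (line 5), and recycling (line 6); the paper simply leaves implicit the conditioning bookkeeping (injectivity of \( m \mapsto 2m \), and that a mixture of uniforms on a common support \( \{1, \ldots, \nu\} \) is uniform) that you spell out, which is a genuine and welcome tightening of its terse ``an induction then completes the proof.'' One cosmetic slip: your illustrative collision \( m = 2n \) in line 6) is never reachable, since \( m < n \) before doubling forces \( m \leq 2n - 2 \) after it (so line 6 never targets \( \nu = n \)), but your mixture argument does not depend on that example.
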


\begin{proof}
    The state \( (X, m) \) is changed by lines 1, 4, 5, and 6.  When \( (X, m) = (1, 1) \), it is trivially true that \( X \sim \text{d}1 \), so line 1 maintains this invariant.

    Suppose \( (X, m) \) has \( X \sim \text{d}m \) before running line 4, 5, or 6.  For line 4, if \( X \sim \unifdist(\{1, \ldots, m\} \) then \( X + m \sim \unifdist(\{m + 1, \ldots, 2m\} \).  Since \( B \) is equally likely to be \( 0 \) or \( 1 \), \( X + Bm \) is equally likely to be any element of \( \{1, \ldots, 2m \} \) and the invariant is maintained.

    For line 5, it is a well known fact that for \( X \sim \text{d}m \) where \( m \geq n \), \( [X | X \leq n] \sim \text{d}n \).  Since \( X \leq n \) leads to \( m = n \), \( [X | m = n] \sim \text{d}n \), Similarly for line 6, if \( X \sim \text{d}m \) where \( m > n \), \( [X \mid X > n] \sim \unifdist(\{n + 1, \ldots, m \} \).  So \( [X - n \mid X > n] \sim \unifdist(\{1, \ldots, m - n\} \).
    
    An induction then completes the proof.
\end{proof}

\begin{proof}[Proof of Theorem~\ref{thm:one}]
The theory of the \emph{randomness recycler} (see for instance~\cite{fill2000randomness,huber2016perfect}) says that two conditions are necessary and sufficient to establish correctness.  The first is that \( [X | m] \sim \text{d}m \) throughout the running of the algorithm, which is true from the previous lemma.  The second is that the algorithm terminates with probability 1.  After at most \( \lceil \log_2(n) \rceil \) steps there will have been at least one opportunity for \( m \) to grow to be at least \( n \).  When \( m \in \{n, n + 1, \ldots, 2n - 1\} \), there is at least a \( 1 / m \geq 1 / (2n) \) chance of termination.  Hence the probability that the algorithm terminates approaches 1 as the number of steps goes to infinity.
\end{proof}

\section{The optimal DDG}

In fact, this simple algorithm is not only correct, it is optimal in the following sense.  

Consider any algorithm that utilizes random bits to reach an outcome.  This can be represented using a \emph{discrete distribution generating} tree, called a DDG for short.  This is a rooted binary tree where each leaf of the tree is associated with an outcome.  The algorithm works by starting with the current node equal to the root of the tree.  A coin is flipped, if the result is 0 move to the left child of the current node, otherwise move to the right.  If the resulting node is a leaf, output the outcome associated with that leaf.  Otherwise make this the new current node, and repeat.

Figure~\ref{fig:DDG} shows two DDGs that output a random variable with probabilities \( (3 / 8, 1 / 2, 1 / 8) \) over outcomes \( (1, 2, 3) \).  The tree on the right will be preferred in practice because the average depth (and hence average number of coin flips used by the algorithm) is smaller.

\begin{figure}[ht!]
\centering
    \begin{tikzpicture}
      \draw (0, 0) to (-1, -1);
      \draw (0, 0) to (1, -1);
      \draw (-1, -1) to (-1.5, -2);
      \draw (-1, -1) to (-0.5, -2);
      \draw (1, -1) to (1.5, -2);
      \draw (1, -1) to (0.5, -2);
      \draw (-1.5, -2) to (-1.75, -3) node[below] {1};
      \draw (-1.5, -2) to (-1.25, -3) node[below] {1};
      \draw (-0.5, -2) to (-0.25, -3) node[below] {1};
      \draw (-0.5, -2) to (-0.75, -3) node[below] {2};
      \draw (1.5, -2) to (1.75, -3) node[below] {2};
      \draw (1.5, -2) to (1.25, -3) node[below] {2};
      \draw (0.5, -2) to (0.25, -3) node[below] {2};
      \draw (0.5, -2) to (0.75, -3) node[below] {3};
    \end{tikzpicture} 
    \hspace*{12pt}
    \begin{tikzpicture}
      \draw (0, 0) to (-1, -1) node[below] {2};
      \draw (0, 0) to (1, -1);
      \draw (1, -1) to (1.5, -2);
      \draw (1, -1) to (0.5, -2) node[below] {1};
      \draw (1.5, -2) to (1.75, -3) node[below] {1};
      \draw (1.5, -2) to (1.25, -3) node[below] {3};
    \end{tikzpicture} 
    \caption{Two DDGs for \( (3 / 8, 1 / 2, 1 / 8) \).  The DDG on the left will always use 3 flips to determine the outcome, while the one on the right uses 1 flip with probability \( 1 / 2 \), 2 flips with probability \( 1 / 4 \), or 3 flips with probability \( 1 / 4 \).}
    \label{fig:DDG}
\end{figure}
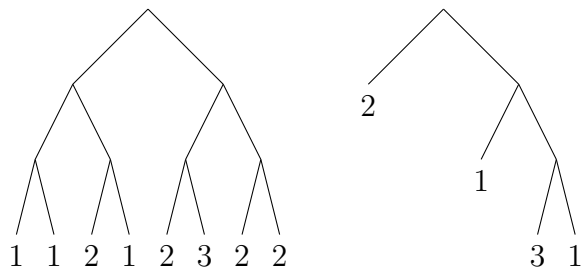

Knuth and Yao defined an optimal DDG for a probability distribution \( p = (p_1, \ldots, p_n) \) as follows.  Let \( N(T) \) denote the random number of bits used by DDG \( T \).  For \( p \), say that \( T \) is \emph{optimal} if for any other DDG \( T' \) with output \( p \), for all \( i \), \( \prob(N(T) > i) \leq \prob(N(T') > i) \). 

Another way to say this is that the number of flips used by the optimal tree is stochastically dominated by the number of flips used by any other tree.

As usual, define the \emph{level} of a node to be the number of edges in the path from the root to the node.  Knuth and Yao showed the following result~\cite{knuth1976complexity}.
\begin{theorem}
    For probability vector \( (p_1, \ldots, p_n) \), a DDG is optimal if and only if outcome \( i \) appears on level \( j \) 
    a number of times equal to 
    \[
    \lfloor 2^j p_i \rfloor - 2 \lfloor 2^{j - 1} p_i \rfloor
    \] (this is the \( j \)th bit in the binary expansion of \( p_i \), using the finite expansion when possible.)
\end{theorem}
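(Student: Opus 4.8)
The plan is to translate the stochastic-dominance definition of optimality into a statement about counting internal nodes level by level, and then to show that the binary-expansion leaf counts are the unique way to minimize these counts at every level at once. First I would set up two bookkeeping identities for an arbitrary DDG $T$: let $c_{i,j}$ be the number of leaves labeled $i$ at level $j$, let $L_j=\sum_i c_{i,j}$ be the total number of leaves at level $j$, and let $I_j$ be the number of internal (non-leaf) nodes at level $j$. Since a fixed node at level $j$ is reached by exactly one length-$j$ string of flips, it is visited with probability $2^{-j}$; summing over leaves labeled $i$ gives $p_i=\sum_{j\ge 1}c_{i,j}2^{-j}$, and summing over internal nodes gives $\prob(N(T)>j)=I_j 2^{-j}$, because $\{N(T)>j\}$ is exactly the event of sitting at an internal node after $j$ flips. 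Consequently $T$ is optimal precisely when $I_j$ is minimized, over all DDGs producing $p$, for every $j$ simultaneously.

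Next I would record the realizability constraint. Using the root count $I_0=1$ and the recursion $I_j=2I_{j-1}-L_j$ (each internal node at level $j-1$ spawns two nodes at level $j$, of which $L_j$ become leaves), one gets $I_j=2^j\bigl(1-\sum_{k\le j}L_k 2^{-k}\bigr)$. A greedy level-by-level construction then shows that a rooted binary tree with prescribed level totals $\{L_j\}$ exists if and only if $I_j\ge 0$ for all $j$, that is, the partial Kraft sums satisfy $\sum_{k\le j}L_k2^{-k}\le 1$; moreover the outcome labels may be distributed arbitrarily among the leaves sharing a level, so only the totals $L_j$ matter for realizability.

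The heart of the argument is a decoupling. Minimizing $I_j$ is the same as maximizing $\sum_{k\le j}L_k2^{-k}=\sum_i\bigl(\sum_{k\le j}c_{i,k}2^{-k}\bigr)$, and I would maximize each outcome's contribution separately. For fixed $i$ the partial sum $\sum_{k\le j}c_{i,k}2^{-k}$ is a nonnegative integer multiple of $2^{-j}$ bounded above by $p_i$, hence at most $\lfloor 2^j p_i\rfloor 2^{-j}$; this bound is attained for \emph{every} $j$ at once exactly by the binary digits $c_{i,k}=b_{i,k}$ with $b_{i,k}=\lfloor 2^k p_i\rfloor-2\lfloor 2^{k-1}p_i\rfloor$, since then $\sum_{k\le j}b_{i,k}2^{-k}=\lfloor 2^j p_i\rfloor 2^{-j}$. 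Summing the per-outcome bounds gives $\sum_{k\le j}L_k2^{-k}\le\sum_i\lfloor 2^j p_i\rfloor 2^{-j}\le 1$, so the binary-expansion counts are realizable and attain the bound at every level; this tree therefore minimizes every $I_j$ and is optimal. Conversely, if $T$ is optimal it attains the summed bound for every $j$, and because each summand is individually bounded, equality in the sum forces equality termwise, $\sum_{k\le j}c_{i,k}2^{-k}=\lfloor 2^j p_i\rfloor 2^{-j}$ for all $i,j$; differencing consecutive values of $j$ then yields $c_{i,j}=b_{i,j}$.

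The step I expect to be the main obstacle is precisely this decoupling: a priori one fears a trade-off in which driving one outcome's leaves to low levels, or merely respecting the tree (Kraft) constraint, prevents some other outcome from reaching its own per-level optimum, or forces a choice between being good at one level versus another. The resolution is that the per-outcome, per-level optima do not compete, since their sum still obeys the partial Kraft inequality $\sum_i\lfloor 2^j p_i\rfloor\le 2^j$, so a single tree attains all the lower bounds on $I_j$ simultaneously. Two small checks complete the picture: the formula for $b_{i,k}$ automatically yields the terminating expansion at dyadic $p_i$ (justifying the parenthetical about using the finite expansion when possible), and $I_j=\sum_i\bigl(2^j p_i-\lfloor 2^j p_i\rfloor\bigr)<n$ gives $I_j 2^{-j}\to 0$, confirming that the optimal tree halts with probability $1$.
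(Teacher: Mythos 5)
Your proof is correct, but there is nothing in the paper to compare it against: the paper states this theorem as an imported result (``Knuth and Yao showed the following result~\cite{knuth1976complexity}'') and gives no proof, so your argument is a self-contained reconstruction of the original Knuth--Yao proof rather than a variant of anything paper-internal. Your route is essentially the standard one, and all the key steps check out: the identity \( \prob(N(T) > j) = I_j 2^{-j} \) is valid because every ancestor of a level-\( j \) node is internal, so each such node is reached with probability exactly \( 2^{-j} \); the recursion \( I_j = 2I_{j-1} - L_j \) and the greedy realizability criterion \( \sum_{k \leq j} L_k 2^{-k} \leq 1 \) are the usual level-by-level Kraft argument; the per-outcome bound \( \sum_{k \leq j} c_{i,k} 2^{-k} \leq \lfloor 2^j p_i \rfloor 2^{-j} \) (an integer multiple of \( 2^{-j} \) below \( p_i \)) together with the telescoping identity for the digits \( b_{i,k} \) gives the decoupling, and \( \sum_i \lfloor 2^j p_i \rfloor \leq 2^j \) shows the per-outcome optima are jointly realizable. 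That last point does double duty, which you correctly exploit: it constructs a tree minimizing every \( I_j \) simultaneously (so the stochastic-dominance definition of optimality is non-vacuous --- an existence claim that is genuinely part of the theorem), and in the converse direction equality of the sum forces termwise equality, after which differencing in \( j \) recovers \( c_{i,j} = b_{i,j} \). Your closing checks (the formula yields the terminating expansion at dyadic \( p_i \), and \( I_j 2^{-j} \to 0 \) gives almost-sure termination) address the parenthetical in the statement. Only two points deserve an explicit sentence: the recursion \( I_j = 2I_{j-1} - L_j \) assumes every internal node has exactly two children, which is true in the paper's DDG model but should be said; and the degenerate case \( p_i = 1 \) (the root itself is a leaf, and \( \lfloor 2^0 p_i \rfloor = 1 \) disturbs the telescoping normalization) should be set aside at the start. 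Neither affects correctness.
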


This is equivalent to the following.
\begin{lemma}
    For probability vector \( (p_1, \ldots, p_n) \), a DDG is optimal if and only if outcome \( i \) appears on any given level either 0 or 1 times, and there is no \( k \) such that outcome \( i \) appears at level \( k, k + 1, k + 2, \ldots \).  
\end{lemma}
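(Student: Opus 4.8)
The plan is to deduce this characterization directly from the Knuth--Yao theorem stated above, by showing that the two lists of conditions describe exactly the same DDGs. Throughout I would fix a DDG that outputs \( (p_1, \ldots, p_n) \), and for each outcome \( i \) and level \( j \) let \( c_{i,j} \) denote the number of times \( i \) appears at level \( j \). Since the probability that the DDG outputs \( i \) is the sum of \( 2^{-j} \) over all leaves labeled \( i \) at level \( j \), grouping by level gives the identity \( \sum_{j \geq 1} c_{i,j} 2^{-j} = p_i \). This holds for every DDG with the correct output distribution and is the only structural fact the argument needs.

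First I would treat the direction from Knuth--Yao to the lemma. By their theorem, optimality is equivalent to \( c_{i,j} = b_{i,j} \), where \( b_{i,j} = \lfloor 2^j p_i \rfloor - 2 \lfloor 2^{j-1} p_i \rfloor \) is the \( j \)th bit of the finite-when-possible binary expansion of \( p_i \). Each \( b_{i,j} \in \{0, 1\} \), so optimality immediately forces each outcome to appear \( 0 \) or \( 1 \) times per level, which is the first lemma condition. For the second condition I would observe that a \( 0/1 \) sequence that is eventually all \( 1 \)'s represents a dyadic rational, for which a terminating expansion also exists; hence the finite-when-possible expansion is never eventually all \( 1 \)'s, i.e. there is no \( k \) with \( b_{i,j} = 1 \) for all \( j \geq k \). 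This is precisely the statement that outcome \( i \) does not appear at every level \( k, k+1, k+2, \ldots \).

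For the converse I would start from the two conditions in the lemma. The first gives \( c_{i,j} \in \{0, 1\} \), so the identity \( \sum_{j} c_{i,j} 2^{-j} = p_i \) says that \( (c_{i,j})_{j \geq 1} \) is a binary expansion of \( p_i \). The key fact to invoke is that every \( x \in [0, 1) \) has a unique binary expansion that is not eventually all \( 1 \)'s; the only ambiguity in binary representations is the pair of expansions of a dyadic rational, one terminating and one ending in a trailing run of \( 1 \)'s. The second lemma condition rules out the trailing-\( 1 \)'s representation, so \( (c_{i,j})_j \) must be the finite-when-possible expansion, giving \( c_{i,j} = b_{i,j} \) for every \( j \). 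Applying this to each \( i \) and invoking the Knuth--Yao theorem shows the DDG is optimal.

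The main thing to get right is the bookkeeping around the dyadic ambiguity: I must verify that ``appears at levels \( k, k+1, k+2, \ldots \)'' corresponds exactly to a trailing run of \( 1 \)'s in the bit sequence, and that excluding this run singles out precisely the finite-when-possible expansion used by Knuth and Yao (the degenerate case \( n = 1 \), where \( p_1 = 1 \), is handled separately and trivially). Once the uniqueness of the not-eventually-all-\( 1 \)'s expansion is in hand, the equivalence is a routine translation between bit sequences and level counts, with no genuine analytic difficulty.
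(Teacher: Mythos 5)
Your proposal is correct and takes essentially the same route as the paper: both identify the per-level counts \( c_{i,j} \) of outcome \( i \) with a binary expansion of \( p_i \) via the identity \( \sum_j c_{i,j} 2^{-j} = p_i \), and both invoke the uniqueness of the binary expansion with no infinite trailing run of \( 1 \)'s to match the level counts to the Knuth--Yao bit condition. Your write-up is actually more complete than the paper's terse one-paragraph sketch, since you spell out both directions of the equivalence and handle the dyadic ambiguity and the degenerate case \( p_i = 1 \) explicitly.
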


\begin{proof}
    If outcome \( i \) appears once at level \( j \), this leaf contributes \( 1 / 2^j \) to the probability of outcome \( i \).  Since \( p_i \) has a unique binary expansion where there are not an infinite number of trailing 1's, this DDG must be equivalent to \( i \) appearing on level \( i \) if and only if the (finite if possible) binary expansion of \( p_i \) has a 1 at level \( i \).  
\end{proof}

This characterization can show that \( \textsf{optimal\_uniform}(n) \) is optimal.

\begin{theorem}
    The algorithm \( \textsf{optimal\_uniform}(n) \) yields an optimal DDG.
\end{theorem}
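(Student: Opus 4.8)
The plan is to apply the preceding characterization lemma to the binary tree \( T \) whose nodes are the states visited by \( \textsf{optimal\_uniform}(n) \), with the level of a node equal to the number of coin flips used to reach it and the leaves being the accepted states (those at which the while-loop exits). Theorem~\ref{thm:one} already guarantees that \( T \) is a DDG for the uniform distribution, so it remains only to check the two conditions of the lemma: (i) each outcome appears at most once on each level, and (ii) no outcome appears on an infinite run of consecutive levels.

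The work is in one structural claim, which I would prove by induction on the flip count \( t \): the active (non-leaf) states after \( t \) flips are exactly the pairs \( (X, m_t) \) with \( 1 \le X \le m_t \), where \( m_0 = 1 \) and \( m_t = 2^t \bmod n \), with \( m_t = 0 \) signalling that the algorithm has already terminated. The inductive step follows the bookkeeping of the lemma preceding Theorem~\ref{thm:one}. A single flip turns the active set \( \{(X, m_{t-1}) : 1 \le X \le m_{t-1}\} \) into \( \{(X', 2m_{t-1}) : 1 \le X' \le 2m_{t-1}\} \), because \( X' = X + B\,m_{t-1} \) takes each value in \( \{1, \ldots, 2m_{t-1}\} \) exactly once as \( X \) and \( B \) vary. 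If \( 2m_{t-1} < n \), lines 5--6 do nothing and \( m_t = 2m_{t-1} \); if \( 2m_{t-1} \ge n \), line 5 converts the states with \( X' \le n \) into leaves and line 6 sends the states with \( X' > n \) to \( \{(X, m_t) : 1 \le X \le m_t\} \) with \( m_t = 2m_{t-1} - n \).

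Condition (i) is then immediate: a level \( t \) contains leaves only when \( 2m_{t-1} \ge n \), and in that case the leaves are precisely the states \( X' = 1, 2, \ldots, n \), so each outcome appears exactly once; every other level contains no leaf at all. For condition (ii), note that outcome \( i \) appears on level \( t \) precisely when \( 2m_{t-1} \ge n \), and that throughout any maximal run of such levels the claim gives \( m_t = 2m_{t-1} - n \). Setting \( d_t = m_t - n \), this recurrence reads \( d_t = 2 d_{t-1} \), so \( |d_t| \) doubles at each step of the run. But an active state always has \( 1 \le m_t \le n - 1 \) (and \( m_t = 0 \) would mean termination, ending the run), whence \( 1 \le |d_t| \le n - 1 \); a doubling sequence cannot stay in this range forever, so the run is finite. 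Thus no outcome occupies an infinite tail of levels, and the characterization lemma yields optimality.

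The main obstacle is the structural induction itself: the entire argument rests on the fact that at every level the active states form the full, repetition-free set \( \{(X, m_t) : 1 \le X \le m_t\} \) sharing a common second coordinate \( m_t \), for only then does the accept step at a triggering level peel off each of the \( n \) outcomes exactly once. Once that invariant is secured, both optimality conditions reduce to the elementary behavior of the recurrence \( m_t = 2^t \bmod n \).
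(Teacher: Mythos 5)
Your proposal is correct and takes essentially the same route as the paper: both arguments verify the two conditions of the characterization lemma by induction on levels, and your doubling recurrence for the deficit \( d_t = m_t - n \) during a run of leaf levels is exactly the paper's computation that \( r \) rejecting states become \( 2^k r - (2^k - 1)n \) after \( k \) levels, with your \( m_t \) playing the role of \( r \). Your explicit invariant that the active states at level \( t \) are precisely \( \{(X, m_t) : 1 \leq X \leq m_t\} \) with \( m_t = 2^t \bmod n \) is a slightly cleaner packaging of the paper's unique-predecessor induction, but the mathematical content is the same.
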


\begin{proof}

    Consider a state \( (X, m) \) at a level greater than 1 in the tree.  Then there is exactly one state \( (X', m') \) and coin flip \( B' \) that moves to this state.

    If the previous level did not have any leaves, then \( m' = m / 2 \), \( B' = \ind(X > m') \), and 
   \[
    X' = X - m' \ind(X > m')
    \]
    is the only state and flip that leads to \( (X, m) \).

    If the previous level did have leaves, then the state that moves at line 6 to \( (X, m) \) is \( (X + n, m + n)  \).  Therefore, the state at the previous level that moves to this state is \( m' = (m + n) / 2 \), \( B' = \ind(X > m') \), and 
    \[
    X' = X + n - B m'
    \]
    is the only option for the previous state and flip that leads to the current state.

    But since the initial state is fixed at \( (1, 1) \), this forms the basis of an induction proof that any state \( (X, m) \) at a later level is determined by a unique set of flips.  In particular, if \( (X, m, B) \) moves to a leaf at a particular level, then there is a unique set of flips that led to that state.  That means that for any outcome at a particular level, there is exactly one leaf at that level that results in that state.  Therefore each outcome appears on at most one leaf at each level.

    Now consider how many levels in a row the DDG can have a particular outcome as a leaf.  A leaf only occurs when \( m \geq n \) and \( X \leq n \) at line 6.  If \( X > n \), then at the next state \( m \leftarrow m - n \).  

    Therefore, if there are \( r \) values for \( X \) that reject at that level, and at the next level there are at most \( 2r - n \) states of \( X \) that can reject at line 6.

    At the next level there will be at most \( 4r - 2n - n = 4r - 3n \) values of \( X \) that reject, then \( 8r - 7n \), then \( 16r - 15n \) and so on.  After \( k \) levels there will be \( 2^k r - (2^k - 1)n = 2^k(r - (1 - (1 / 2^k)n \) states of \( X \) that lead to rejection.  Because \( r < n \), this either reaches 0 or goes negative, indicating that rejection does not occur.  If rejection does not occur at a particular level then there cannot be a leaf at that same level.

    Therefore, it cannot be the case that the DDG for this algorithm has an infinite sequence of consecutive levels with leaves.  Therefore, by the previous lemma this DDG must be optimal.
    
\end{proof}

\section{Upper bounding the expected number of flips}
\label{sec:three}

It is straightforward to calculate exactly the average number of flips used by \( \textsf{optimal\_sampling}(n) \) for any particular \( n \).

For instance, if \( n = 5 \) it takes three flips to make \( m = 8 \), at which point there is a \( 5 / 8 \) chance of accepting the result and terminating the algorithm.  If it does not accept then \( m = 8 - 5 = 3 \), and one more flip makes \( m = 6 \).  At this point either the algorithm accepts with probability \( 5 / 6 \) or rejects and sets \( m = 6 - 5 = 1 \), so the algorithm starts from the beginning.  Hence
\[
\mean[N_{n = 5}] = 3 + \frac{3}{8}\left[1 + \frac{1}{6} \mean[N_{n = 5}] \right],
\]
so \( \mean[N_{n = 5}] = 3.6 \).

Note that any DDG for a fair die will need at least \( \lceil \log_2(n) \rceil \) bits on average.  Results in~\cite{knuth1976complexity} give that \( \log_2(n) + 2 \) bits suffice on average.  The following result strengthens this slightly.

\begin{theorem}
    The expected number of flips used by \( \textsf{optimal\_sampling}(n) \) is at most \( \lceil \log_2(n) \rceil + 1 \).
\end{theorem}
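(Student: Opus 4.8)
The plan is to exploit the optimality of the DDG just established. Since the algorithm produces an optimal DDG for the uniform distribution $p_i = 1/n$, the Knuth--Yao characterization tells us that each outcome $i$ appears at level $j$ exactly $b_j$ times, where $b_j \in \{0,1\}$ is the $j$th bit in the binary expansion of $1/n$. As all $n$ outcomes share the same probability, the total number of leaves at level $j$ is $n b_j$, and since every leaf at level $j$ is reached by a unique length-$j$ sequence of flips (each of probability $2^{-j}$), the number of flips $N$ satisfies $\prob(N = j) = n b_j 2^{-j}$. Hence $\mean[N] = n\sum_{j \geq 1} j\, b_j 2^{-j}$, and I would first record the sanity check $n \sum_j b_j 2^{-j} = n \cdot (1/n) = 1$.

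Next I would locate the leading bit. Writing $k = \lceil \log_2 n \rceil$, the bounds $2^{-k} \leq 1/n < 2^{-(k-1)}$ show that $b_j = 0$ for $j < k$ and $b_k = 1$. Multiplying the normalization $n\sum_{j \geq k} b_j 2^{-j} = 1$ by $k$ and subtracting, the expectation collapses to
\[
\mean[N] - k = n \sum_{j \geq k}(j - k)\, b_j 2^{-j} = n\sum_{j > k}(j-k)\, b_j 2^{-j},
\]
where the $j = k$ term vanishes. It therefore suffices to show this tail sum is at most $1/n$.

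The crux is the inequality $\sum_{j > k}(j - k)\, b_j 2^{-j} \leq \sum_{j \geq k} b_j 2^{-j} = 1/n$, which would immediately give $\mean[N] - k \leq n\cdot(1/n) = 1$. Substituting $\ell = j - k$ and recalling $b_k = 1$, this inequality is equivalent to $\sum_{\ell \geq 1}(\ell - 1)\, b_{k+\ell}\, 2^{-\ell} \leq 1$, and since every $b_{k+\ell} \leq 1$ it follows from the exact identity
\[
\sum_{\ell \geq 1}(\ell - 1) 2^{-\ell} = \sum_{\ell \geq 1}\ell\, 2^{-\ell} - \sum_{\ell \geq 1}2^{-\ell} = 2 - 1 = 1.
\]

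I expect this last reduction to be the main obstacle. A naive term-by-term bound $\sum_{j>k}(j-k)\,b_j 2^{-j} \leq \sum_{j>k}(j-k)2^{-j} = 2^{1-k}$ only yields $\mean[N] \leq k + 2$, merely recovering the Knuth--Yao estimate. The improvement to $k + 1$ genuinely requires pairing the weight $(j-k)$ against the normalization sum rather than bounding the tail in isolation, so that one unit of weight per active bit is absorbed through the $(\ell-1)$ identity; carefully accounting for the $b_k = 1$ term is precisely what makes the saved unit materialize. Finally I would conclude $\mean[N] \leq k + 1 = \lceil \log_2 n \rceil + 1$, and optionally note the $n = 5$ check $\mean[N] = 3.6 \leq 4$ as a consistency test.
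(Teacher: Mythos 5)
Your proof is correct, and it takes a genuinely different route from the paper's. The paper never touches the DDG characterization in this section: it analyzes the algorithm's recursion directly, writing the expected flip count as a nested expression over the successive leftover die sizes \( m_2, m_3, \ldots \) with \( f_1 = \lceil \log_2 n \rceil \) initial flips, bounding the nest by a single worst-case value \( m^* \in \{n, \ldots, 2n-1\} \), solving the fixed-point equation \( \alpha = (1 - n/m^*)(f_{m^*} + \alpha) \) to get \( \alpha = \beta \lceil \log_2(1/\beta) \rceil \) with \( \beta = (m^* - n)/n \), and verifying \( \alpha \leq 1 \) by cases over the dyadic intervals \( \beta \in (2^{-(i+1)}, 2^{-i}] \). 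You instead lean on the optimality theorem proved in the preceding section together with the Knuth--Yao level characterization, which yields the exact law \( \prob(N = j) = n b_j 2^{-j} \) in terms of the binary expansion of \( 1/n \), and then prove the bound as a pure inequality about that expansion, the key step being \( \sum_{\ell \geq 1} (\ell - 1) b_{k+\ell} 2^{-\ell} \leq \sum_{\ell \geq 1} (\ell - 1) 2^{-\ell} = 1 \), with the leading bit \( b_k = 1 \) supplying the absorbed unit. Each approach buys something. The paper's argument is self-contained in that it bounds the algorithm's cost without invoking DDG optimality at all, though its maximization step --- that the worst case has every \( m_k \) equal to one value \( m^* \), justified by an appeal to self-similarity --- is stated rather informally; your route gives an exact formula \( \mean[N] = n \sum_j j\, b_j 2^{-j} \) before any estimation (it reproduces the paper's check \( \mean[N_{n=5}] = 3.6 \)), and it makes transparent where the improvement over Knuth--Yao's \( +2 \) comes from: uniformity places the leading bit of every \( p_i = 1/n \) exactly at position \( k = \lceil \log_2 n \rceil \), which is precisely what your remark about the failure of the naive tail bound isolates. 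One small point to write out in a final version: justify \( b_k = 1 \) explicitly --- if \( b_k = 0 \) then \( 1/n \leq \sum_{j > k} 2^{-j} = 2^{-k} \), and combined with \( 1/n \geq 2^{-k} \) this forces the all-ones tail, which the finite-expansion convention excludes (when \( n \) is a power of two the finite expansion has \( b_k = 1 \) directly).
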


\begin{proof}
    Start the algorithm by flipping coins until the initial state \( m_1 \) has doubled to be at least \( n \).  This will require \( f_1 = \lceil \log_2(n / m) \rceil \) flips.  After these flips, say that the state is \( (X_2, m_2) \) where \( m_2 \in \{n, \ldots, 2n - 1\} \).  To be specific,
    \[
    m_2 = 2^{\lceil \log_2(n / m) \rceil}.
    \]

    The chance of accepting is \( n / m_2 \), so the chance that further work is needed is \( 1 - n / m_2 \).  If further work is needed, then \( f_2 \) more flips are needed to double \( m_2 \) back up to be at least \( n \), and so on.  Altogether the expected number of flips \( t \) is 
    \[
    t = f_1 + \left(1 - \frac{n}{m_2} \right) \left[ f_2 + \left(1 - \frac{n}{m_3}\right)\left[f_3 + \left( 1 - \frac{n}{m_4} \right) \cdots \right.\right.
    \]
    where \( f_k = \lceil \log_2(n / (n - m_{k})) \rceil \) for \( k \geq  2 \).
    
    Suppose this is thought of as a function of \( m_2, m_3, \ldots \) chosen independently of each other in \( \{n, \ldots, 2n - 1\} \).  Then once \( m_2 \) is chosen to maximize the result, \( m_3 \) should receive the same value because of the self-similarity of the expression. 
 Hence there is \( m^* \in \{n, \ldots, 2n - 1\} \) such that 
    \[
    t \leq f_1 + \left(1 - \frac{n}{m^*} \right) \left[ f_{m^*} + \left(1 - \frac{n}{m^*}\right)\left[f_{m^*} + \left( 1 - \frac{n}{m^*} \right) \cdots \right.\right.
    \]

    Setting 
    \[
    \alpha = \left(1 - \frac{n}{m^*} \right) \left[ f_{m^*} + \left(1 - \frac{n}{m^*}\right)\left[f_{m^*} + \left( 1 - \frac{n}{m^*} \right) \cdots \right.\right.
    \]
    gives
    \[
    \alpha = \left(1 - \frac{n}{m^*} \right) \left[ f_{m^*} + \alpha \right],
    \]
    so
    \[
    \alpha = \left( \frac{m^*}{n} - 1 \right) \left(\lceil \log_2\left(\frac{n}{m^* - n} \right) \right\rceil.
    \]

    Let \( \beta = (m^* - n) / n \).  If \( \beta \in (1 / 2, 1] \), then \( \alpha = 1 \cdot 1 = 1 \).  If \( \beta \in (1 / 4, 1/ 2] \), then \( \alpha = (1 / 2) \cdot 2 = 1 \).  If \( \beta \in (1 / 8, 1 / 4] \), \( \alpha = (1 / 4) \cdot 3 = 3 / 4 \), and for smaller \( \beta \) the value of \( \alpha \) only shrinks further.

    Hence \( t \leq \lceil \log_2(n) \rceil + 1 \).
    
\end{proof}

\section{Extending to discrete distributions}
\label{sec:four}

For the general algorithm, the idea is to also keep state \( (X, m) \), but use \( X \) to draw uniformly from the set of outcomes that have a \( i \) in their bit description at the bit equal to the number of flips of the coin taken.  

\vspace*{12pt}
\begin{tabular}{rl}
    \toprule
    \multicolumn{2}{l}{\textbf{Algorithm} \( \textsf{optimal\_discrete}(p = (p_1, p_2, \ldots)) \) } \\
    \midrule
  \gk & \hspace{0em} \( m \leftarrow 1,\ X \leftarrow 1, Y \leftarrow 0 \) \\
  \gk & \hspace*{0em} while \( Y = 0 \) \\
  \gk & \hspace*{1em} let \( A \leftarrow\{i:2p_i \geq 1\} \),
  \( n \leftarrow \#(A) \) \\
  \gk & \hspace*{1em} for all \( i \), let \( p_i \leftarrow 2p_i - \ind(2p_i \geq 1)\) \\
  \gk & \hspace*{1em} draw \( B \) uniform over \( \{0, 1\} \) \\
  \gk & \hspace*{1em} \( X \leftarrow X + Bm \), \( m \leftarrow 2m \) \\
  \gk & \hspace*{1em} if \( m \geq n \) and \( X \leq n \) then \( m \leftarrow n \), let \( Y \) be the \( X \)th element of \( A \).  \\
  \gk & \hspace*{1em} if \( m \geq n \) and \( X > n \) then \( X \leftarrow X - n \), \( m \leftarrow m - n \) \\ 
  \gk & \hspace*{0em} return \( Y \) \\  
  \bottomrule
\end{tabular}
\vspace*{12pt}

\begin{theorem}
    Algorithm \( \textsf{optimal\_discrete}(p) \) is optimal.
\end{theorem}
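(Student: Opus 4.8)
The plan is to show that the DDG produced by \( \textsf{optimal\_discrete}(p) \) places outcome \( i \) on level \( j \) exactly \( b_{i,j} = \lfloor 2^j p_i \rfloor - 2\lfloor 2^{j-1} p_i \rfloor \) times, i.e. according to the \( j \)th binary digit of \( p_i \); by the Knuth--Yao characterization (equivalently, the preceding optimality lemma) this is precisely the optimal structure. The first observation is that one pass through the \texttt{while} loop flips exactly one coin, so iteration \( j \) corresponds to descending from level \( j-1 \) to level \( j \), and the quantities computed on lines 3--4---the shifted vector \( p^{(j)} \), the set \( A_j = \{ i : p^{(j)}_i \ge 1/2 \} \), and its size \( n_j \)---depend only on \( j \), not on the coin flips. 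Writing \( S_j = \sum_i p^{(j)}_i \), line 4 yields the recurrence \( S_{j+1} = 2 S_j - n_j \) with \( S_1 = 1 \), from which an easy induction shows each \( S_j \) is a nonnegative integer.

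The key structural step, and the one genuinely new ingredient beyond the uniform case, is to prove the algorithm never gets ``stuck'' with outcomes of bit \( j \) that it cannot place at level \( j \). For this I would maintain the invariant that every continuing (non-leaf) state at the start of iteration \( j \) has die size \( M_j \) with \( M_j = S_j \). It holds initially (\( M_1 = S_1 = 1 \)), and since the rejection update on line 8 sends \( M_j \mapsto 2 M_j - n_j \), which is exactly the recurrence for \( S_j \), it is preserved; in particular \( M_j \) is path-independent. Because each \( i \in A_j \) contributes at least \( 1/2 \) to \( S_j \), we get \( S_j \ge n_j / 2 \), so after the doubling on line 6 the die size is \( 2 M_j = 2 S_j \ge n_j \) and the guard \( m \ge n \) on lines 7--8 always fires: every continuing state either accepts or rejects, and on rejection the new die size is \( M_{j+1} = 2 M_j - n_j = S_{j+1} \ge 0 \). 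As in the earlier lemma, \( [X \mid m] \sim \text{d}m \) is maintained, so among the continuing states at level \( j \) the value \( X \) is uniform on \( \{1, \dots, M_j\} \), and each of the \( n_j \) accepting slots \( X \in \{1, \dots, n_j\} \) is reached with probability \( 2^{-j} \).

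For the two conditions of the optimality lemma I would argue as follows. For the ``at most once per level'' condition I reuse the unique-predecessor argument of the uniform proof: since all continuing nodes at level \( j-1 \) share the single die size \( M_{j-1} \), the doubling on line 6 makes \( (X', B) \mapsto X' + B M_{j-1} \) a bijection from \( \{1, \dots, M_{j-1}\} \times \{0,1\} \) onto \( \{1, \dots, 2 M_{j-1}\} \), and the deterministic rejection subtracts the fixed \( n_j \); hence every node at level \( j \), leaf or internal, has a unique predecessor and generating flip, and by induction a unique path from the root. Thus the \( n_j \) accepting slots are distinct leaves, and since line 7 assigns the \( X \)th element of \( A_j \) under the injective enumeration \( X \mapsto A_j[X] \), outcome \( i \) labels a leaf at level \( j \) exactly once if \( i \in A_j \) and never otherwise; that is, outcome \( i \) appears on level \( j \) precisely when \( b_{i,j} = 1 \). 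For the ``no infinite consecutive run'' condition I would observe that lines 3--4 are exactly the greedy ``double and subtract the integer part'' expansion of \( p_i \) with the \emph{weak} threshold \( 2 p_i \ge 1 \); this weak inequality is what prevents an infinite tail of \( 1 \)'s, since an exact value \( p^{(j)}_i = 1/2 \) emits the digit \( 1 \) and maps to \( 0 \). Hence dyadic probabilities terminate in \( 0 \)'s, while a non-dyadic \( p_i \) has a unique expansion that cannot end in all \( 1 \)'s, so no outcome appears on every level beyond some \( k \).

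Combining the two conditions, outcome \( i \) appears on level \( j \) exactly \( b_{i,j} \) times with no infinite run, which is the optimal DDG structure; summing the leaf probabilities also gives \( \prob(Y = i) = \sum_j b_{i,j} 2^{-j} = p_i \), so the tree indeed generates \( p \) and is optimal. The main obstacle is the second paragraph: establishing \( M_j = S_j \) together with \( n_j \le 2 M_j \), since this is precisely what guarantees that a single doubling per level always creates enough room to place every outcome whose \( j \)th bit is \( 1 \), and it is the fact that lets the clean unique-predecessor argument of the uniform case carry over unchanged. A secondary point to state carefully is termination with probability \( 1 \), which follows as in Theorem~\ref{thm:one} from \( M_j / 2^j = S_j / 2^j \to 0 \).
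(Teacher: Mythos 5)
Your proof is correct, and it follows the same overall route as the paper---verify the two conditions of the optimality lemma via a unique-predecessor argument plus the identification of \( A_j \) with the \( j \)th bits---but the paper's own proof is a two-sentence sketch that simply asserts the uniform-case arguments carry over, while you supply the one genuinely nontrivial ingredient that makes them carry over. Specifically, the invariant \( M_j = S_j = \sum_i p^{(j)}_i \) (an integer, by the recurrence \( S_{j+1} = 2S_j - n_j \)) is nowhere in the paper, and it is exactly what the sketch needs: it gives \( 2M_j = 2S_j \geq n_j \) since each \( i \in A_j \) contributes at least \( 1/2 \) to \( S_j \), so the guard \( m \geq n \) on lines 7--8 fires on every iteration and every outcome with \( j \)th bit equal to \( 1 \) actually receives a leaf at level \( j \); and it shows the die size is path-independent, which is what lets the bijection \( (X', B) \mapsto X' + B M_j \) and hence the unique-predecessor induction go through when \( n \) changes from level to level. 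You also verify the second condition of the lemma (the weak threshold \( 2p_i \geq 1 \) emits the terminating expansion, so no trailing run of \( 1 \)'s) and recover \( \prob(Y = i) = \sum_j b_{i,j} 2^{-j} = p_i \) directly, giving correctness of the output distribution as a byproduct, whereas the paper leaves both points implicit. Two minor blemishes, neither substantive: in the bijection sentence you write \( M_{j-1} \) where your own indexing convention (continuing states at the start of iteration \( j \) sit at level \( j - 1 \) with die size \( M_j \)) calls for \( M_j \); and the degenerate point mass \( p_i = 1 \) is a genuine exception to the no-infinite-run claim (there \( p^{(j)}_i = 1 \) for all \( j \) and a leaf appears at every level), so the argument should exclude it, though the paper silently ignores this case as well.
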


\begin{proof}
    Although the value of \( n \) changes from flip to flip, the arguments used in the uniform case remain the same.  Here there is a unique set of flips that results in a particular state  \( (X, m) \) and hence any particular leaf outcome appears only once at a level.  

    Here \( A \) is chosen after \( d \) flips so that \( i \in A \) if and only if \( \lceil 2^d A_i \rceil = 1 \), so the DDG must be optimal.
\end{proof}

Of course this algorithm is a bit more abstract than the first, given that to actually implement line 3 requires an analysis of the probability distribution being used to determine which of the outcomes have a 1 bit in the position equal to the number of flips.

\section{Conclusion}

The Knuth-Yao optimal sampler can be implemented using a randomness recycler type algorithm.  This allows for a very simple implementation of the algorithm for generating the roll of a fair die and can be extended to handle general discrete distributions given the ability to draw out the \( i \)th bit in the binary expansion of the probabilities.

\printbibliography

\end{document}